\def\P{\hbox{\sf P}}
\def\E{\hbox{\sf E}}
\def\1{\hbox{\sf 1}}
\begin{document}
\maketitle

\section{Introduction}
In recent years, Internet traffic has increased explosively due to the increased use of smart-phones, tablet computers, etc. This causes a shortage problem of wireless spectrum. 
Cognitive wireless is considered as a promising solution to this problem~\cite{3,Mitola,Wang10,Akyildiz08,Letaief09}. For recent development of cognitive radio networks, we refer to the survey paper by Ostovar et al. ~\cite{Ostovar20}. In wireless networks, secondary users (unlicensed users) are allowed to cognitively use the bandwidths that are originally allocated to primary users (licensed users). Secondary users should use the bandwidths in such a way that does not interfere primary users. 
In particular, secondary users can use the bandwidths only if primary users are not present.  To this end, secondary users must be aware of the presence of primary users so as to evacuate upon arrivals of primary users. From this point of view, primary users have absolute priority over secondary users meaning that the transmission of a secondary user might be interrupted by a primary user. We assume that interrupted secondary users evacuate to the head of the buffer and {\bf resume} their transmission as soon as a channel is available. 

Motivated by the above situation, we propose analyzing a multiserver queueing system with an infinite buffer for secondary users while primary users have absolute priority over secondary users and are lost if all channels are already occupied by other primary users. Under this assumption (and under Poisson inputs), from the view point of primary users, the system of servers behaves as an Erlang loss system while from that of secondary users, the system is an infinite buffer model where secondary users are served when some servers are not occupied by primary users. In our model, the service times of primary and secondary customers follow two distinct arbitrary distributions. 

A closely related model is the paper by Mitrani and Avi-Itzhak~\cite{Mitrany68}. In this paper, the author considers an M/M/$N$ system where each server is subject to random breakdowns and repairs. The author analyzes the joint distribution of the queue length and the state of the servers using a generating function approach. The model in~\cite{Mitrany68} can be considered as a model with $N$ primary customers. In our model, primary customers arrive according to a Poisson process implying that there are infinite number of primary customers. Akutsu and Phung-Duc~\cite{Akutsu19} examine a closely related Markovian  model 
in which  secondary customers first sense the channels before occupying them.
The stability condition is conjectured and verified by simulation in~\cite{Akutsu19}.
Salameh et al. \cite{Salame17,Salameh20} consider models with limited number of sensing secondary users. Other queueing models of cognitive radio networks could be found in~\cite{wong,konishi,osama,Shajin19,Zhang19,Dudin16,Paul18}. In~\cite{konishi,osama,Dudin16,Liu19}, the service time distributions of primary and secondary customers are either restricted to Markovian distributions (exponential or phase-type distributions) and/or the assumption that the number of active secondary users are finite. In~\cite{Shajin19,Zhang19,Nazarov18,Paul18,Dimitriou,Dragieva,Kim12,Azarfar14}, models with single channel are investigated. 

In this paper, we first relax these assumptions by considering arbitrary distributions for service time of primary users and secondary users. Under these assumptions, we are able to obtain an explicit stability condition. To the best of our knowledge, this is the first analytical result for cognitive radio network models with multiple channels. 
Next, assuming exponential service time distributions, we study the model in depth using the matrix analytic method\cite{Neuts}. This allows us not only to recover the stability condition but also to numerically evaluate effects of the input parameters on the performance of secondary users. The advantage of the matrix analytic method is to provide a systematic way to analyze more complicated models with finite buffers for primary customers and/or Markovian arrival processes (MAP) as well as Phase-type (PH) distributions~\cite{Dudin16}. The  priority queues 
dominate among 
models for service differentiation in communication networks and service systems, and by this reason applications of our model are not restricted to cognitive wireless 
networks~\cite{Vishnevskiy20}. 

The rest of this  paper is organized as follows. In section~\ref{description}, we describe our model in detail 
with focus on the regenerative setting.
In section~\ref{stability_analysis}, we present the stability analysis of  the basic  model, while in section~\ref{multiclass}, as a by product of our analysis, we derive the stability condition for multiserver multiclass system without losses. Section~\ref{Markovian_model} presents a detailed analysis for a special case with exponential distributions for which we are able to obtain performance measures. Finally, in Section~\ref{sim_ana}, we present numerical and simulation experiments to show insights into the performance of our system and the sensitivity of service time distribution while  Section~\ref{conclusion:sec} concludes our paper.
 
\section{Description of the system
} \label{description}

We consider the following  modification of the  Erlang system with two classes of customers with $c$ identical servers, Poisson inputs with rates $\lambda_i$, general independent and identically distributed (i.i.d.) 
service times
$\{S_n^{(i)},\,n\ge1\}$ for class-$i$ customers, 
$i=1,2$.
Class-1  customers  have {\it preemptive priority} and  are lost  if meeting all busy servers, while class-2 (non-priority)  customers  stay in the system regardless of the state of the system
and wait in the queue, if any, according to the {\it non-idling}  FCFS (first-come-first-served) service discipline. Class-1 and  class-2 customers correspond to primary and secondary customers while a server is a channel, in cognitive radio networks.
Denote 
service rates
$\mu_i = 1/\E S^{(i)}$,
where $S^{(i)}$ is the (generic) service time of class-i customers. We also denote by  $\{t_n,\,n\ge1\}$ the   arrival instants of the superposed (Poisson) input with rate $\lambda=\lambda_1+\lambda_2$, and i.i.d. exponential interarrival times  $\{\tau_n=t_{n+1}-t_n\}$ 
with  generic interarrival time $\tau$.
By preemptive-resume priority,
a class-1 customer occupies a server  busy by a class-2 customer, provided there are no idle servers upon his arrival.  

 We deduce stability conditions of this system  based on the regenerative approach.
More precisely, we show that  the basic processes describing the dynamics of the system are  regenerative and then find conditions 
under which these processes are {\it positive recurrent} \cite{Asmus,Smith}. 
This approach, being quite intuitive and transparent, is a powerful tool of  performance and stability analysis of a wide class of queueing processes including non-Markov ones as well.   The main idea of the regenerative stability analysis is based on a characterization of the limiting remaining regeneration time.
More precisely,
if we can show that this time does not go to infinity in probability, then the mean regeneration period length is finite, and thus the process is positive recurrent \cite{Morozov04,Rosario}.


First we describe the regenerative structure of the system. We denote by $Q_i(t)$ the number of class-$i$ customers at time instant  $t^-$ and let $Q(t)=Q_1(t)+Q_2(t)$. Obviously 
\begin{alignat}{1}
&Q_1(t) \le c. 
\label{eq:3}
\end{alignat}
Also let $W_i(t)$ be the workload (remaining work) at instant  $t^-$ of class-$i$ customers, $i=1,2$, and let
 $W(t)=W_1(t)+W_2(t)$. Denote by $S_{i}(t)$  the remaining service time in server $i$
 ($S_i(t)=0$ if the server is idle), and let $R(t)$ be  the  set of servers occupied by class-1 customers at the instant $t$ (we put $R(t)=\emptyset$  if there are no such customers). 
Then 
\begin{alignat}{1}
W_1(t) = \sum_{i \in R(t)}{S_{i}(t)}. \label{eq:4}
\end{alignat}
Denote 

\begin{alignat}{1}
W_n = W(t_n), \quad Q_n = Q(t_n),\nonumber 
\end{alignat}
so, at the arrival instant of  customer $n$  the remaining work in all servers equals $W_n$ and the total number of customers equals $Q_n$.
Note that 

$$
\{W_n=0\} = \{Q_n=0\},\,\,\,\,n\ge 1.
$$ 
Then the   {\it regeneration instants} of  the 
processes $\{Q(t)\}$ and $\{W(t)\}$ are  defined   as  follows

\begin{alignat}{1}
T_{n+1} = \min \{t_k>T_n \, : \, W_k=Q_k=0 \,\}, \, \,\, n \ge 0,\, \,\,T_0:=0, 
\label{eq:7}
\end{alignat}
with generic regeneration period length $T$. It means that $T$ is distributed as any  distance between  two regeneration points, i.e. as $T_{n+1}-T_n$.  (To define recursion (\ref{eq:7}) for $n\ge1$, we put $T_0=0$ by definition.) 
To explain, we note that 
 a regeneration happens when a customer (of any class) meets a completely idle system, and in this case the random  instant   $T_n$  is the arrival instant  of  the $n$th customer who  meets an idle system.

In what follows we assume that 
the first primary customer (of any class) arrives in the idle system at instant $t_1=T_0=0$, in which case $t=0$ is indeed the first (initial)  regeneration point.
It is called {\it zero initial state}.
We call regenerative process  (in continuous-time case) {\it positive recurrent} if the mean regeneration period length is finite, that is  $\E T<\infty.$
A key observation  is that, because of the  priority, the processes $\{Q_1(t)\}$, $\{W_1(t)\}$ describing  class-1 customers  
are regenerative with regeneration instants 

$$
T^{(1)}_{n+1}=\{t_n^{(1)}>T_n^{(1)}: Q_1(t_n^{(1)})=0\},\,n\ge 0  \,\,(T^{(1)}_{0}=0),
$$
with generic regeneration period length $T^{(1)}$, where $\{t_n^{(1)}\}$  are the arrival instants of class-1 customers.  In other words, $T_n^{(1)}$ is the $n$th arrival instant of class-1 customer which meets no other class-1 customers in the system.
Because of  \eqref{eq:3}, the process $\{Q_1(t),\,t\ge0\}$ is {\it tight},
that is, for any $\varepsilon>0$, there exists a constant $C$ such that

$$
\inf_{t\ge0}\P(Q_1(t)\le C)\ge 1-\varepsilon.  
$$
Then  it follows from (\ref{eq:4}) and from the tightness of the process $\{S_i(t),\,t\ge0\}$  that the process $\{W_1(t),\,t\ge0\}$ is tight as well, see  \cite{Morozov97,Morozov04}. 
It then follows from  \cite{Rosario} that these processes (describing  class-1 customers solely) are  also   positive recurrent, that is 

\begin{alignat}{1}
\E \, T^{(1)} < \infty. \nonumber
\end{alignat}
Define the {\it remaining regeneration time} in the system at instant $t$
as

\begin{alignat}{1}
T(t) = \min_{k}{(T_k-t: T_k-t\ge0)},\,\,t\ge0.
\label{7} 
\end{alignat}
In the regenerative stability analysis below we use  the following basic result from the {\it renewal theory} \cite{Feller,Rosario}: 
if there exist a non-random sequence of time instances $z_i\to \infty$, as $i\to \infty$,  and constants $\delta>0$ and $D<\infty$  such that

\begin{eqnarray}
\inf_i\P(T(z_i)\le D)\ge \delta,\label{9a}
\end{eqnarray}
then $\E T<\infty$. 

\section{Stability analysis}\label{stability_analysis}
We will prove conditions which imply positive recurrence of the system. On the other hand, because the processes related to class-1 customers are positive recurrent and the  inputs are Poisson,
in particular, there exists a stationary probability  $\mathsf{P}_i$  that exactly $i$ servers are busy by class-1 customers. In other words, there exist the limits

\begin{eqnarray}
\lim_{t\to\infty}\P(Q_1(t)=i)=\P_i,\,\,i=0,\ldots,c.
\label{9}
\end{eqnarray}
Also denote $\rho_2=\lambda_2/\mu_2.$ It is assumed that the first customer arrives in an empty system at instant $t_1=0$, and if the system is empty at this instant, we call it {\it zero initial state} or initially empty system.  Note that in this case  the instant $T_0=0$ is indeed a regeneration point and then  the first regeneration period is {\it stochastically} equivalent to generic period, that is 
$T_1=_{st} T$.
In this case  positive recurrence means that $\mathsf{E}T<\infty$.
In this section we prove the following main result.

\begin{theorem}\label{Th1}
If condition

\begin{alignat}{1}
&\rho_{2} + \sum_{i=1}^{c}{i \mathsf{P}_{i}} < c,
\label{eq:2:stat_cond}
\end{alignat}
holds then $\mathsf{E}T<\infty$, that is initially empty system is 
positive recurrent.
\end{theorem}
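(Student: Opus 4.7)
The plan is to verify the renewal criterion~(\ref{9a}) by exhibiting a deterministic sequence $z_i\to\infty$ together with constants $D<\infty$, $\delta>0$ such that $\inf_i\P(T(z_i)\le D)\ge\delta$. The natural route is: (i) establish tightness of the total workload $\{W(z)\}$, and (ii) use the memorylessness of the Poisson input to argue that, from a tight state, the system empties and regenerates within bounded time with positive probability.

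Since $W(z)=W_1(z)+W_2(z)$ and tightness of $\{W_1(z)\}$ was already obtained in the preliminaries, I would focus on tightness of $\{Q_2(z)\}$ (equivalently, of $\{W_2(z)\}$) by contradiction. Suppose $Q_2(z_n)\to\infty$ in probability along some $z_n\to\infty$. Write the class-2 balance
\begin{alignat}{1}
Q_2(t)=Q_2(0)+A_2(t)-D_2(t), \nonumber
\end{alignat}
where $A_2(t)$ counts class-2 arrivals and $D_2(t)$ class-2 service completions in $[0,t]$. The Poisson SLLN gives $A_2(t)/t\to\lambda_2$ a.s. For $D_2(t)$, whenever $Q_2(u)\ge c$ the number of class-2 customers in service is exactly $N^{(2)}(u)=c-Q_1(u)$; combined with the regenerative ergodicity of $\{Q_1(t)\}$ (a consequence of its positive recurrence and the limits~(\ref{9})),
\begin{alignat}{1}
\frac{1}{t}\int_0^t N^{(2)}(u)\,du\to c-\sum_{i=1}^c i\,\mathsf{P}_i\quad\text{a.s.}, \nonumber
\end{alignat}
and a standard renewal argument (each class-2 customer eventually accumulates exactly $S^{(2)}$ units of service, and the residual boundary term is $O(1)$) yields $D_2(t)/t\to\mu_2(c-\sum_{i=1}^c i\,\mathsf{P}_i)$ a.s. Dividing the balance equation by $t$,
\begin{alignat}{1}
\frac{Q_2(t)}{t}\to\lambda_2-\mu_2\Bigl(c-\sum_{i=1}^c i\,\mathsf{P}_i\Bigr)=-\mu_2\Bigl[c-\rho_2-\sum_{i=1}^c i\,\mathsf{P}_i\Bigr]<0 \nonumber
\end{alignat}
by the hypothesis~(\ref{eq:2:stat_cond}). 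Since $Q_2(t)\ge 0$ forces $\liminf Q_2(t)/t\ge 0$, we reach a contradiction, and $\{Q_2(z)\}$ is tight.

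Given tightness of $\{W(z)\}$, pick $C$ with $\inf_i\P(W(z_i)\le C)\ge 1-\varepsilon$. On the event $\{W(z_i)\le C\}$ the system drains within time $C$, and if no Poisson arrival occurs in $[z_i,z_i+C]$ while at least one arrival occurs in $(z_i+C,z_i+C+1]$, that arrival meets an empty system and triggers a regeneration at distance at most $C+1$ from $z_i$. By independence of the arrival process from the past, $\inf_i\P(T(z_i)\le C+1)\ge(1-\varepsilon)e^{-\lambda C}(1-e^{-\lambda})>0$, and criterion~(\ref{9a}) yields $\E T<\infty$.

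The main obstacle is the tightness step for $Q_2$: rigorously converting the subsequential blow-up $Q_2(z_n)\to\infty$ into the claim that $N^{(2)}(u)=c-Q_1(u)$ on a set of times of asymptotic density one, so that the ergodic rate $c-\sum_{i=1}^c i\,\mathsf{P}_i$ actually governs $D_2(t)/t$. The path-level argument uses that $Q_2$ changes by unit jumps, so a large value at $z_n$ propagates backwards to a macroscopic time interval on which $Q_2(u)\ge c$, where the regenerative SLLN for $Q_1$ supplies the required time average. Quantifying the boundary term from the at most $c$ partially-served class-2 customers at time $t$ in the renewal identity for $D_2(t)$ is routine but must be checked for the general service-time distribution.
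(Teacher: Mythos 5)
Your overall architecture is the same as the paper's (proof by contradiction, a balance identity, ergodic rates for the class-1 process, then Poisson memorylessness to verify the renewal criterion (\ref{9a})), but the contradiction step as you have set it up contains a genuine gap, and it is exactly the one you flag as ``the main obstacle.'' You assume blow-up only along a subsequence, $Q_2(z_n)\Rightarrow\infty$, and then invoke the long-run average $D_2(t)/t\to\mu_2(c-\sum_i i\P_i)$ over \emph{all} of $[0,t]$. That rate is valid only if the servers are saturated by class-2 work on a set of times of asymptotic density one in $[0,t]$; a subsequential blow-up gives you saturation only near the instants $z_n$. Your backward-propagation patch (unit jumps force $Q_2(u)\ge c$ on an interval before $z_n$) produces a saturated window of length of order $Q_2(z_n)/\lambda_2$, and nothing in the hypothesis makes $Q_2(z_n)$ grow proportionally to $z_n$ — it could be $O(\log z_n)$ — so the window is not macroscopic and the time-average argument does not close. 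A localized drift bound on that window ($Q_2(z_n)\lesssim Q_2(z_n-L)-|{\rm drift}|\cdot L$) also fails to give a contradiction, since $Q_2(z_n-L)$ is uncontrolled.

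The fix is the paper's choice of contradiction hypothesis: assume the \emph{full} weak blow-up $Q_2(t)\Rightarrow\infty$ as $t\to\infty$ (i.e.\ (\ref{14})). Then $\P(Q_2(u)\le c)\to 0$, hence $\frac1t\E I(t)\to 0$ as in (\ref{eq:19}), the servers are saturated on a density-one set of times, and the balance identity (the paper uses the workload version (\ref{eq:11}); your customer-count version works too, modulo the renewal boundary terms you mention) yields $\rho_2+\sum_i i\P_i\ge c$, contradicting (\ref{eq:2:stat_cond}). Crucially, the negation of full blow-up is precisely the statement (\ref{18}) — existence of $u_i\to\infty$, $N$ and $\delta_0>0$ with $\inf_i\P(Q_2(u_i)\le N)\ge\delta_0$ — and that is all the final memorylessness step needs; full tightness of $\{Q_2(t)\}$ (your stated goal, with $1-\varepsilon$ in place of $\delta_0$) is neither obtainable from ruling out subsequential blow-up nor required. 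Your concluding step from a bounded-workload event to a regeneration within bounded time is correct and matches the paper's; just replace $1-\varepsilon$ by a positive constant $\delta_0$.
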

\begin{proof}
 In the interval of time $[0,t)$, denote:  $\hat{V}_1(t)$  the work of class-1 customers accepted by the system (which means $\hat{V}_1(t)$ does not include the lost work); $V_2(t)$  the received work of class-2 customers; $B(t)$  the aggregated busy time of the servers which  equals  departed  work in $[0,t)$. Also denote  $I(t)=\sum_{i=1}^{c}{I_{i}(t)}$, where $I_{i}(t)$ is the idle time of server $i$ in $[0,\,t]$, so $B(t)=ct-I(t)$.
It is now clear that the following balance equation holds.

\begin{eqnarray}
\hat{V}_1(t) + V_2(t) = W_1(t) + W_2(t) + B(t) = W_1(t) +W_2(t)+ ct - I(t). \label{eq:11}
\end{eqnarray}
First of all we note that, by   positive recurrence of "class-1 processes",  
we obtain  (see \cite{Smith}) that

\begin{alignat}{1}
\lim_{t\to\infty}\frac{1}{t}\E Q_1(t) = 0, \quad
\lim_{t\to\infty}\frac{1}{t}\E W_1(t) = 0.  
\label{eq:9}
\end{alignat}
Note that the  convergence with probability 1 (w.p.1),  that is,

$$
Q_1(t)=o(t),\,\,\,W_1(t)=o(t),\,\,t\to \infty,
$$ 
also holds
 in (\ref{eq:9}). 
Now we apply a  proof by contradiction and  assume that, under condition  (\ref{eq:2:stat_cond}),

\begin{alignat}{1}
Q_2(t) \Rightarrow \infty,\,\,\,t\to\infty, \label{eq:10}
\end{alignat}
which means the queue size of class-2 customers  {\it increases infinitely  in probability}. 
It means that, for each fixed $k$,

\begin{eqnarray}
 \lim_{t\to\infty}\mathsf{P}(Q_2(t)>k)= 1.
\label{14}
\end{eqnarray}
 Denote by $\1(\cdot)$ the  indicator function, then 
 
$$
I_i(t)=\int_0^t\1 (S_i(u)=0)du,\,\,\,t\ge0,
$$
is the idle time of server $i$ in the interval $[0,\,t]$. It then  follows that the average aggregated  idle time of all servers 
in the interval of time $[0,\,t]$ is
\begin{eqnarray}
\E I(t) =\E \Big[\sum_{i=1}^c\int_0^t\1(S_i(u)=0)du\Big]
= \sum_{i=1}^c\int_0^t\P(S_i(u)=0)du. \nonumber
\end{eqnarray}
Because the service discipline is non-idling, then
\begin{eqnarray}
\mathsf{P} (Q_2(t) > c) \le \mathsf{P} (S_{i}(t) > 0),\,\,\,i=1,\ldots,c, 
\label{eq:16}
\end{eqnarray}
and, by \eqref{14}, (\ref{eq:16}), we obtain as $t\to\infty$,

\begin{alignat}{1}
\P(S_{i}(t)=0)=1-\P(S_{i}(t)>0)\le 1- \P(Q_2(t) > c) \rightarrow 0. \nonumber
\end{alignat}
 It  now  easily follows  that 
 
\begin{alignat}{1}
\lim_{t\to\infty}\frac{1}{t}\E I(t)=\lim_{t\to\infty}\sum_{i=1}^c\frac{1}{t}\int_0^t\P(S_i(u)=0)du =0. 
\label{eq:19}
\end{alignat}
Denote by $A_2(t)$  the number of class-2 customers arrived  in the interval $[0,t)$. The process  $\{A_2(t),\,t\ge0\}$ is a  positive recurrent {\it cumulative process} \cite{Smith}  
or, equivalently,  the process with {\it regenerative increments} \cite{Serfozo} in which each arrival instant of class-2 customer is  a regeneration instant of the incoming load process  

$$
V_2(t)=\sum_{k=1}^{A_2(t)}S_{k}^{(2)},\,\,t\ge0.
$$
We note that the i.i.d. exponential interarrival times $\{\tau_n^{(2)}\}$ between class-2 customers are the {\it regeneration periods} of the load process $\{V_2(t),\,t\ge0\}$. Denote   the generic  period $\tau^{(2)}$
which has  parameter $\lambda_2$. Also  we note that the process $\{V_2(t)\}$ has the  increment $S^{(2)}$ over interval $\tau^{(2)}$.  Then it  follows, for instance, from Theorem  55 in  \cite{Serfozo},  that  

\begin{eqnarray}
\lim_{t\to\infty}\frac{\mathsf{E}V_2(t)}{t} 
= \frac{\E S^{(2)}}{\E\tau^{(2)}} =\rho_2.
 \label{17}
\end{eqnarray}
 Now we define the total busy time $B_1(t)$ when all servers are occupied by  class-1 customers, in the interval $[0,\,t]$. A key observation is that $B_1(t)$ can be represented as follows
 
\begin{eqnarray}
B_1(t)&=&\int_0^t\1(Q_1(u)=1)du+2\int_0^t\1(Q_1(u)=2)du+\cdots\nonumber\\
&+&c\int_0^t\1(Q_1(u)=c) du
=\sum_{i=1}^c i\int_0^t\1(Q_1(u)=i)du.\nonumber
\end{eqnarray}
This gives the following equality 

\begin{alignat}{1}
\hat{V}_1(t) =B_1(t)+W_1(t)= \sum_{i=1}^ci\int_0^t\1(Q_1(u)=i) du 
+ W_1(t),\,\,\,t\ge0.\nonumber
\end{alignat}
Recall that all processes related to class-1 customers are positive recurrent regenerative.
Moreover, since the input is Poisson,
the weak limit $Q_1(t)\Rightarrow Q_1$ exists  
and is the stationary number of servers occupied by class-1 customers with distribution 
(\ref{9}).
It is worth mentioning that, by the property PASTA \cite{Asmus}, the stationary probability $\P_i$
is also  the {\it limiting fraction of the time} when there are exactly $i$ class-1 customers  in the system, which means

\begin{alignat}{1}
\mathsf{P}_{i} = \lim_{t\to\infty}\frac{1}{t}\int_{0}^{t}{\P(Q_1(u)=i) du}
,\,\,i=1,\ldots,c.\label{15a} 
\end{alignat}
Now it follows from (\ref{eq:9}), (\ref{20}), (\ref{15a}) that 

\begin{eqnarray}
\lim_{t\to\infty}\frac{1}{t}\E\hat{V}_1(t)= 
\sum_{i=1}^{c}i\lim_{t\rightarrow\infty}\frac{1}{t}\int_{0}^{t}{\mathsf{P}(Q_1(u)=i) du}=\sum_{i=1}^ci\mathsf{P}_i. 
\label{21}
\end{eqnarray}
Now collecting together results (\ref{eq:9}), (\ref{eq:19}), (\ref{17}) and (\ref{21}), we obtain  from the balance equation (\ref{eq:11}) that  

\begin{alignat}{1}
\rho_2+ \sum_{i=1}^{c} i \P_i=  \lim_{t\to\infty}\frac{1}{t}\E W_2(t)+c, \nonumber
\end{alignat}
implying 

\begin{alignat}{1}
\lim_{t\rightarrow\infty}\frac{1}{t}\E W_2(t) = \sum_{i=1}^{c}i \P_i + \rho_2 - c \ge 0,\nonumber 
\end{alignat}
or 

\begin{alignat}{1}
\rho_2 + \sum_{i=1}^{c}
i \P_i \ge c. \nonumber
\end{alignat}
This contradiction with assumption
(\ref{eq:2:stat_cond})
shows that assumption 
(\ref{eq:10}) is false, and thus 
$Q_2(t)\not \Rightarrow \infty$. 
Hence 
there exist constants $N<\infty,\,\delta_0>0$ and a (non-random) sequence $u_i\to \infty$ such that (cf. (\ref{9a}))

\begin{alignat}{1}
\inf_i\mathsf{P} (Q_2(u_i) \le N) \ge \delta_0.
\label{18}
\end{alignat}
Note that

$$
W_2(t)\le \sum_{k=1}^{Q_2(t)}S_k^{(2)}+\sum_{k=1}^cS_k(t),\,\,\,t\ge0,
$$
and that the remaining service times
$\{S_k(t),\,t\ge0\}$ are the tight processes
$k=1,\ldots,c$, see  \cite{Morozov97}.
A routine but tedious  calculation   confirms an intuitive result, that the bound (\ref{18}) 
implies  the corresponding lower   bound  for the workload process $\{W_2(t),\,t\ge0\}$, namely
\begin{eqnarray}
\inf_i\mathsf{P} (W_2(u_i)\le D_2)\ge \delta_2,
\nonumber
\end{eqnarray}
for some constants $D_2,\,\delta_2>0$.
Moreover, because the positive recurrent process $\{W_1(t),\,t\ge0\}$ is also tight, in particular,  
\begin{eqnarray}
\inf_i\mathsf{P} (W_2(u_i)\le D_2,\,W_1(u_i)\le D_1)\ge \delta_1,\nonumber
\end{eqnarray}
for some constants $D_1,\,\delta_1>0$.
Denote $\tau(t)$ the remaining (exponential) interarrival time at instant $t$ (in the superposed Poisson input with rate $\lambda$), so  
$\P (\tau(t) \le x) \ge e^{-\lambda x}$ for any $x\ge0$. 
Recall definition \eqref{7}  and note that,
for arbitrary fixed $u_i$ (satisfying (\ref{18})) and a constant $L>D_1+D_2$,  the following  lower bound for the remaining regeneration time holds.
\begin{eqnarray}
\nonumber
\P (T(u_i) \le L) &\ge& \P\Big (W_2(u_i) \le D_2,\; W_1(u_i) \le D_1,\, L \ge \tau(u_i) > D_1+D_2\Big) \nonumber \\
&\ge& \delta_1(e^{-\lambda (D_1+D_2)}-e^{-\lambda L})=:\varepsilon>0.\label{19}
\nonumber
\end{eqnarray}
To explain this inequality, we note that on the event 

$$
\Big\{W_2(u_i) \le D_2,\; W_1(u_i) \le D_1,\, L \ge \tau(u_i) > D_1+D_2\Big\},
$$ 
the first customer arriving after instant $u_i$
(at instant $u_i+\tau(u_i)\le u_i+L$) meets a completely idle system and thus a regeneration occurs.  It explains why then $T(u_i) \le L$.
Because this bound  is uniform in $u_i$ and $i$, we  obtain  that  the inequality  (\ref{9a}) holds and thus $\E T<\infty$.
Thus  we conclude that \eqref{eq:2:stat_cond} is the {\it sufficient  stability condition}.
\end{proof}
Now we show that  \eqref{eq:2:stat_cond} is the necessary stability condition
as well. Namely, we assume that the initially empty  system is  positive recurrent, which means $\mathsf{E}T<\infty$.  Note that

$$
I(t)\ge \int_0^t\1(Q(u)=0)du=:I_0(t),
$$
where $I_0(t)$ is the aggregated time when all servers are {\it simultaneously free} in the interval $[0,\,t]$.
Then it follows from the  theory of regenerative processes that 
\begin{eqnarray}
\lim_{t\to\infty}\frac{I(t)}{t}\ge \lim_{t\to\infty}\frac{I_0(t)}{t}
=\frac{\E I_0}{\E T},
\label{28}
\end{eqnarray}
where $I_0$ is the idle time, during  a regeneration period, when  all servers are simultaneously free   \cite{Asmus,Serfozo}.
Because $\tau$ is exponential and $\E S^{(i)}<\infty$, for arbitrary $\delta>0$ and   some $\varepsilon_1>0$,

$$
\mathsf{P}(\tau>\delta +S^{(i)})=\varepsilon_1,\,\,\,i=1,2.
$$ 
Hence, 

$$
\mathsf{E}I_0\ge \mathsf{E}(I_0;\tau-S^{(i)}>\delta)\ge \delta\delta_1>0,
$$
and  thus the limit in (\ref{28}) is positive.
It then immediately follows from  (\ref{17}), (\ref{21}), (\ref{28}), and from the balance equation (\ref{eq:11}),
by dividing both sides by $t$ and letting $t\to\infty$, that the inequality (\ref{eq:2:stat_cond}) holds.  This implies that 
(\ref{eq:2:stat_cond}) is indeed  the necessary stability condition.
Combining this result with the statement of Theorem \ref{Th1} we obtain 
the following main statement.
\begin{theorem}
The initially empty system under consideration is positive recurrent if and only if condition (\ref{eq:2:stat_cond})
 holds.
\end{theorem}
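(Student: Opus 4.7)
Sufficiency is already contained in Theorem \ref{Th1}, so the plan reduces to proving necessity: assuming the initially empty system is positive recurrent ($\E T<\infty$), I want to derive \eqref{eq:2:stat_cond}. The overall strategy is to divide the balance equation \eqref{eq:11} by $t$, let $t\to\infty$, and identify each term on both sides.

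First I would check the limits of the individual pieces. From positive recurrence of the joint regenerative process (and hence of each component), standard regenerative theorems give $W_1(t)/t\to 0$ and $W_2(t)/t\to 0$; the latter follows because $\E T<\infty$ implies that $W_2(t)$ has a proper weak limit and is therefore tight, so $W_2(t)/t\to 0$ in probability. The limits $V_2(t)/t\to\rho_2$ and $\hat V_1(t)/t\to\sum_{i=1}^c i\P_i$ are already supplied by \eqref{17} and \eqref{21}, and do not require the positive-recurrence hypothesis.

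The main step, and the main obstacle, is to show that the idle-time density is strictly positive, namely
\[
\liminf_{t\to\infty}\frac{I(t)}{t}\ \ge\ \lim_{t\to\infty}\frac{I_0(t)}{t}\ =\ \frac{\E I_0}{\E T}\ >\ 0,
\]
where $I_0$ is the length of the all-servers-simultaneously-idle portion of a generic regeneration cycle. The first inequality is immediate from $I_0(t)\le I(t)$, and the equality is the renewal-reward limit for the positive-recurrent cycles. Strict positivity of $\E I_0$ is the delicate piece: using that $\tau$ is exponential and $\E S^{(i)}<\infty$, I would argue that for any fixed $\delta>0$ the event $\{\tau_1>\delta+S_1^{(i)}\}$ has positive probability, and on that event the system is completely empty for an interval of length at least $\delta$ inside $[0,T]$, so $\E I_0\ge\delta\,\P(\tau>\delta+S^{(i)})>0$.

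Combining these ingredients and passing to the limit in \eqref{eq:11} divided by $t$ yields
\[
\rho_2+\sum_{i=1}^{c}i\P_i\ =\ c-\lim_{t\to\infty}\frac{I(t)}{t}\ \le\ c-\frac{\E I_0}{\E T}\ <\ c,
\]
which is exactly \eqref{eq:2:stat_cond}. Together with Theorem \ref{Th1}, this yields the iff characterization of positive recurrence claimed in the statement.
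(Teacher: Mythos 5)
Your proposal is correct and follows essentially the same route as the paper: dividing the balance equation \eqref{eq:11} by $t$, using the limits \eqref{17} and \eqref{21}, and establishing the strict positivity of the idle-time rate via $I(t)\ge I_0(t)$, the renewal-reward limit $\E I_0/\E T$, and the observation that the exponential interarrival time exceeds $\delta+S^{(i)}$ with positive probability. Your explicit treatment of $W_2(t)/t\to 0$ via tightness is a small point of added care that the paper leaves implicit, but the argument is otherwise identical.
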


\begin{remark}
Following \cite{Rosario}  one can  extend  the analysis to prove that, under condition (\ref{eq:2:stat_cond}), 
the system is positive recurrent
under arbitrary fixed values 
$W(0)$ and $Q(0)$.
In this case we shall also show  that the 1st regeneration period, which in general has another distribution than $T$, is finite w.p.1.
\end{remark}
\bigskip
It is worth mentioning that, because to analyse  class-1 customers   only, we can treat the system as a loss  $M/G/c/0$ system, then 
the stationary distribution $\{\mathsf{P}_i\}$ can be found from the celebrated {\it Erlang forula}, for instance, see \cite{Asmus}.    

 Denote by $\zeta$  the stationary number of  servers available for class-2 customers, that is (stochastically),
$$
\zeta=_{st}c-\sum_{i=1}^c i\1(Q_1=i).
$$
Now, denoting $\Delta=\E \zeta$, we obtain that  the mean stationary number of servers available for class-2 customers is
$$
\Delta=c- \sum_{i=1}^c i\mathsf{P}_i.
$$
  Then condition  (\ref{eq:2:stat_cond}), rewritten as  
\begin{eqnarray}
    \rho_2 <\Delta, \label{20}
\end{eqnarray}
has the following  probabilistic interpretation: the traffic intensity of class-2 customers 
must be less than the mean number of
 the available  servers $\Delta$. 
 Condition (\ref{20}) is a {\it negative drift condition} and,  being  intuitive,  however requires a strict proof because, unlike classic systems with a fixed number of servers, in this case the number of available servers $\zeta$ is {\it random}. 
 This is the main reason why we present above the detailed  regenerative proof of this result.

\subsection{\texorpdfstring{$M/G/c$}{TEXT} system without losses}\label{multiclass}
In this section, keeping previous notation,  we obtain as a by-product of our analysis, a well-known stability condition of a buffered  multiclass system,  that is the system with the {\it infinite capacity buffer for both classes of customers}. Thus the only difference between the original and the new systems is that there are no losses in the new system.

\begin{theorem}
The initially empty system without losses is positive recurrent if and only if the condition

$$
\rho_1+\rho_2
< c
$$
holds.
\end{theorem}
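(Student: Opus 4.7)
The plan is to adapt the regenerative argument of Theorem~\ref{Th1} to the buffered no-loss system. The key observation is that, under preemptive priority, the class-1 subsystem evolves in isolation as a standard $M/G/c$ queue, since class-1 customers never ``see'' class-2 customers. By classical results this subsystem is positive recurrent if and only if $\rho_1 < c$, which is implied by the hypothesis $\rho_1 + \rho_2 < c$. Moreover, by work-conservation in the class-1 $M/G/c$ queue combined with PASTA, the long-run time-averaged number of servers occupied by class-1 customers equals the offered load $\rho_1$, so the analogue of (\ref{21}) becomes $\lim_{t\to\infty}\E V_1(t)/t = \sum_{i=1}^{c} i\,\mathsf{P}_i = \rho_1$, where $\mathsf{P}_i$ is the stationary probability that $i$ servers are busy with class-1 customers.

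For sufficiency, I would follow the contradiction scheme of Theorem~\ref{Th1} line by line. Since no class-1 customers are lost, the balance equation \eqref{eq:11} simplifies to $V_1(t) + V_2(t) = W_1(t) + W_2(t) + ct - I(t)$. Assuming $Q_2(t)\Rightarrow\infty$, the non-idling argument of the original proof still gives $\lim \E I(t)/t = 0$; \eqref{17} still yields $\lim \E V_2(t)/t = \rho_2$; \eqref{eq:9} remains valid because the class-1 processes are positive recurrent whenever $\rho_1 < c$; and the identity above plays the role of (\ref{21}). Collecting these limits in the balance equation gives $\rho_1 + \rho_2 \ge c$, contradicting the hypothesis. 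The subsequent extraction of a sequence $u_i$ with $\inf_i\mathsf{P}(Q_2(u_i)\le N)\ge \delta_0$, the upgrade to a joint lower bound on the workloads $W_1(u_i),\,W_2(u_i)$ via tightness, and the construction of a uniform positive lower bound on $\mathsf{P}(T(u_i)\le L)$ through the exponential interarrival time, all transfer verbatim and deliver $\E T<\infty$ via \eqref{9a}.

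For necessity, assuming $\E T<\infty$, I would mirror the paragraph following Theorem~\ref{Th1}: positive recurrence gives $\lim I(t)/t \ge \E I_0/\E T$, and $\E I_0>0$ because, after a regeneration, with positive probability the next interarrival time exceeds the first service requirement so that the system empties again. Dividing the balance equation by $t$ and passing to the limit then yields $\rho_1 + \rho_2 = c - \lim I(t)/t < c$, which is the claimed strict necessary condition.

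The main obstacle is a clean justification of the identity $\sum_{i=1}^{c} i\,\mathsf{P}_i = \rho_1$, which here replaces the role played by the Erlang formula in Theorem~\ref{Th1}: unlike in the loss setting, $\mathsf{P}_i$ is not available in closed form, and the equality must be derived from the fact that the class-1 subsystem, viewed in isolation, is a stable $M/G/c$ queue whose mean number of busy servers coincides with its offered load. Once this time-averaged busy-server identity is in place, the rest of the argument is a routine adaptation of the regenerative machinery already developed.
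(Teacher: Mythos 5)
Your argument is correct and follows the same contradiction scheme as the paper, but you take a detour on the one step that actually changes, and you flag as the ``main obstacle'' something that dissolves if you take the paper's route. The paper's entire modification is this: since no class-1 work is lost, the accepted work $\hat V_1(t)$ is replaced by the full offered work $V_1(t)=\sum_{k=1}^{A_1(t)}S_k^{(1)}$, which is determined by the \emph{input alone} and not by the system state. Its rate is therefore obtained exactly as in (\ref{17}) for class 2 --- a positive recurrent cumulative process with regenerative increments over the exponential class-1 interarrival times --- giving $\lim_{t\to\infty}\E V_1(t)/t=\lambda_1\E S^{(1)}=\rho_1$ with no reference to the stationary occupancy distribution. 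Plugging this into the balance equation together with (\ref{17}), $\lim\E I(t)/t=0$ and $W_1+W_2\ge0$ already yields $\rho_1+\rho_2\ge c$ and the contradiction. By contrast, you insist on reproducing the decomposition $V_1(t)=B_1(t)+W_1(t)$ and the occupancy identity $\sum_{i=1}^c i\,\mathsf{P}_i=\rho_1$, i.e.\ the analogue of (\ref{21}). That identity is true for the stable $M/G/c$ class-1 subsystem (it is the standard ``mean number of busy servers equals offered load'' consequence of work conservation), but it is exactly the machinery the no-loss setting lets you discard: in Theorem~\ref{Th1} the occupancy route was forced because $\hat V_1(t)$ depends on which arrivals are blocked, whereas here it is not. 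So your proof works, but the step you identify as the hard part is avoidable, and avoiding it is the whole point of the paper's one-line reduction. One genuine ingredient you do need (and correctly invoke, as does the paper implicitly) is positive recurrence and tightness of the class-1 $M/G/c$ subsystem under $\rho_1<c$, which enters when upgrading the bound on $Q_2(u_i)$ to a joint bound on $W_1(u_i)$ and $W_2(u_i)$; your necessity argument matches the paper's.
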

The proof of this statement is quite similar to that has been given  in the previous section. The only difference is that now we replace $\hat V_1(t)$  by the full work $V_1(t)$ generated by all class-1 customers which arrive in the interval of time $[0,\,t)$.  It remains to note  that 

$$
\lim_{t\to\infty}\frac{\E V_1(t)}{t}=\rho_1. 
$$
\begin{remark} Using  previous arguments,  it is straightforward to show that condition 

$$
\sum_{i=1}^{N}{\rho_{i}} < c
$$
is stability criterion 
for $N$-class system with no losses and arbitrary work-conserving discipline including various priority policies.

\end{remark}

\section{Exponential service time distributions}\label{Markovian_model}
In this section, we examine more closely a special case of the system described in Section 2, in which  service times of both  customer classes follow exponential distributions, with parameter $\mu_i$ for class-$i$ customers. 

In this pure Markovian case, for stability analysis  it is possible by applying the {\it matrix analytic method}  \cite{Neuts}.
This alternative proof of stability is instructive. Moreover, in this setting we can calculate the stationary distribution of the basic Markov process and as a result obtain the corresponding stationary performance indexes. In this setting, the process  $\{(Q_1(t), Q_2(t)),\,t \ge 0 \}$ is a continuous-time Markov Chain with the state space $\mathbb{S}$ given by

$$
\mathbb{S}=\{(i,j) \in \{0,1,...,c\}\times \mathbb{N}\}.
$$
Grouping states into levels according to their values of $Q_2(t)$, the system can be formulated as a Quasi-Birth-and-Death  process (QBD) whose the infinitesimal generator $Q$ is expressed as follows	

$$
Q =
\begin{pmatrix}
B_0 & C & O & O & O & ... & ...&...&...\\
A_{1} & B_1 & C & O & O & ... & \vdots&\vdots&\vdots\\
O & A_{2} &B_2 & C & O  & ... &\vdots&\vdots&\vdots\\
\vdots&\vdots& \ddots &\ddots & \ddots&\vdots&\vdots&\vdots&\vdots \\
O & O & ... & A_{c} &B_c & C & O &...&\vdots \\
O & O & ... & O & A_{c} &B_c & C & O &... \\
\vdots&\vdots&\vdots &\vdots & \vdots&\ddots &\ddots & \ddots & \vdots& \\
\end{pmatrix},
$$
where $O$ is a $(c+1) \times (c+1)$ zero matrix, and $A_n, B_n, C$ are $(c+1) \times (c+1)$ block matrices given by

$$
C =
\begin{pmatrix}
\lambda_2 & 0 & 0 & 0  \\
0 & \lambda_2 & 0 & 0  \\
\vdots& \vdots  &\ddots &\vdots  \\
0 & 0  & 0 & \lambda_2
\end{pmatrix},
$$

$$
A_n =
\begin{pmatrix}
n\mu_2 & 0 & 0 & ... & 0 & 0 &...&0\\
0 & n\mu_2 & 0 & ... & 0 & 0 &...&0\\
\vdots &\vdots  & \ddots & \vdots & \vdots & \vdots & \vdots &\vdots\\
0 & 0 & ... & n\mu_2&0&...&0&0\\
0 & 0 & ... & 0 & (n-1)\mu_2 & ... &0&0\\
\vdots & \vdots & \vdots & \vdots & \vdots & \ddots & \vdots & \vdots\\
0 & 0 & ... & 0 & 0 & ...& \mu_2 & 0\\
0 & 0 & ... & 0 & 0 & ... & 0 & 0
\end{pmatrix},
$$
for $n \le c$, and $A_n = A_c$, for $n > c$;

$$
B_n =
\begin{pmatrix}
b_{n,0} & \lambda_1 & 0 & ... & 0 & 0 &...&0&0\\
\mu_1 & b_{n,1} &\lambda_1 & ... & 0 & 0 &...&0&0\\
0 & 2\mu_1 & b_{n,2} & \ddots & 0 & 0 &...&0&0\\
\vdots &\vdots  & \ddots & \ddots & \ddots & \vdots & \vdots &\vdots &\vdots\\
0 & 0 & 0 & ... & b_{n,c-n}&\lambda_1&...&0&0\\
0 & 0 & 0 & ... & (n-1)\mu_1 & b_{n,c-(n-1)}& \ddots &0&0\\
\vdots & \vdots & \vdots & \vdots & \vdots & \ddots & \ddots & \ddots & \vdots\\
0 & 0 & 0 & ... & 0 & 0 & ...& b_{n,c-1} & \lambda_1\\
0 & 0 & 0 & ... & 0 & 0 & ... & c\mu_1 & -(\mathrel{{\lambda_2}{+}{c\mu_1}})\\
\end{pmatrix},
$$	
for $n \le c$, and $B_n = B_c$ for $n > c$.
In case $n \le c$, the diagonal elements of $B_n$ are given by	

$$
b_{n,i}=-(\lambda_1+\lambda_2+i\mu_1+min\{c-i,n\}\mu_2).
$$



Denote by $\boldsymbol{\eta} = (\eta_0, \eta_1,...,\eta_c)$ the row vector representing the stationary distribution of the infinitesimal generator $Q^*=A_c+B_c+C$. It is easy to see that 



$$
\boldsymbol{\eta}=\left(\eta_0, \rho_1 \eta_0, \frac{\rho_1^2}{2!}\eta_0,...,\frac{\rho_1^i}{i!}\eta_0,...,\frac{\rho_1^c}{c!}\eta_0\right), 
$$
where 

$$
\eta_0=\frac{1}{\sum_{i=0}^{c} \frac{\rho_1^i}{i!}}\,\,\, \mbox{and}\,\,\, \rho_1=\frac{\lambda_1}{\mu_1}.
$$
The Quasi-Birth-Death process is ergodic if and only if the following condition holds~\cite{Neuts}.

$$
\boldsymbol{\eta} C\boldsymbol{e} < \boldsymbol{\eta} A_c\boldsymbol{e}, 
$$
where $\boldsymbol{e}$ denotes the $(c+1)$-dimension column vector of ones. This stability condition can be further transformed as 

\begin{align}
\lambda_2 < \frac{\sum_{i=0}^{c-1}(c-i)\frac{\rho_1^i}{i!}}{\sum_{i=0}^{c}\frac{\rho_1^i}{i!}}\mu_2
. \label{eq:39}
\end{align}
The above condition can also be rewritten as

$$
\rho_2 + \sum_{i=1}^{c}i\eta_i<c,
$$
which is identical to the stability condition in general case stated in \eqref{eq:2:stat_cond}.


In what follows, we consider the system under the stability condition. Now, let $\pi(i,j) = \P(Q_1(t)=i,Q_2(t)=j)$ denote the stationary probability of the Markov chain, and let 

$$
\boldsymbol{\pi}_{j} = (\pi(0,j),\pi(1,j),...,\pi(c,j))\,\,\,
\mbox{for}\,\,\, j \in \mathbb{N}.
$$
According to {\it Matrix-analytic-method}~\cite{Neuts,Phung-Duc10}, we have

$$
\boldsymbol{\pi}_{j}=\boldsymbol{\pi}_{c}R^{j-c}, \quad j \geq c, \qquad \boldsymbol{\pi}_{j} = \boldsymbol{\pi}_{j-1}R^{(j)}, \quad j =c,c-1,\dots,1,
$$
where $R$ is the minimal non-negative solution of 

$$
C + RB_c + R^2A_c = O, 
$$
and 

$$
R^{(j)} = -C(B_j+R^{(j+1)}A_{j+1})^{-1}, \quad j = c-1,c-2,\dots,1,
$$
given that $R^{(c)} = R$ and $R$ is numerically computed using algorithms in~\cite{Neuts}.




Finally, $\boldsymbol{\pi}_{0}$ is the unique solution of the following equations

\begin{align*}
\boldsymbol{\pi}_{0}(B_0+R^{(1)}A_1)& =\boldsymbol{0}, \\
\boldsymbol{\pi}_0\left(I+ \sum_{i=1}^{c-1}\prod_{j=1}^{i}R^{(j)}+\left(\prod_{j=1}^{c}R^{(j)}\right)(I-R)^{-1}\right) \boldsymbol{e} & =1,
\end{align*}
where $I$ denotes the $(c+1)\times(c+1)$ identity matrix.





Moreover, due to Little's law, we  obtain the  average waiting time, $\mathsf{E} W_q$, of class-2 customers as follows

\begin{align*}
\mathsf{E} W_q&= \frac{\sum_{j=0}^{\infty}\sum_{i=0}^{c} \max\{0,i+j-c\}\pi(i,j)}{\lambda_2} \\ \nonumber 
& = \frac{\sum_{j=1}^{c-1}\sum_{i=c-j+1}^{c} (i+j-c)\pi(i,j) + \boldsymbol{\pi}_{c}(I-R)^{-2}R\boldsymbol{e} + \boldsymbol{\pi}_{c}(I-R)^{-1}\boldsymbol{f}}{\lambda_2}, 
\end{align*}
where $\boldsymbol{f}=(0,1,2,...,c-1,c)^T$.






There are other performance measures which describe QoS of such a system. For instance, if a class-2  customer is interrupted by class-1 customer, we call it  a {\it  termination event}. 
Denote the set 

$$
\mathbb{S}^{*} =\{(i,j): i+j \ge c,\,i \le c-1\},
$$
which contains the states when 
all servers are occupied and there is at least one class-2 customer occupying a server. Then the  average number of termination events per class-2 customer 
is given by

$$
\mathsf{E}N_T 
= \frac{\lambda_1}{\lambda_2}\sum_{(i,j)\in \mathbb{S}^{*}} \pi(i,j),
$$
where we apply the  property  PASTA \cite{Asmus}.
We also present a new alternative way how to calculate $\mathsf{E}N_T$ based on the regenerative arguments.
Indeed, denote by $A_1(t)$ the number of class-1 customers arriving
in the interval $[0,\,t)$ and recall similar definition $A_2(t)$ for class-2 arrivals. 
Recall notation 

$$
Q(t)=Q_1(t)+Q_2(t)\Rightarrow Q=Q_1+Q_2.
$$
Then using the basic asymptotic results for positive recurrent  regenerative processes  \cite{Asmus,Rosario},
we obtain $\mathsf{E}N_T$  as the following w.p.1 limit  of fraction of class-1 arrivals  interrupting class-2 customers: 
\begin{eqnarray}
\mathsf{E}N_T&=&\lim_{t\to\infty}\frac{1}{A_2(t)}\sum _{n=1}^{A_1(t)}\1(Q(t_n^{(1)})\ge c,\,Q_1(t_n^{(1)})\le c-1)\nonumber\\
&=&\lim_{t\to\infty}\frac{A_1(t)}{t}\frac{1}{A_1(t)}\sum _{n=1}^{A_1(t)}\1(Q(t_n^{(1)})\ge c,\,Q_1(t_n^{(1)})\le c-1)
\cdot\lim_{t\to\infty}\frac{t}{A_2(t)}\nonumber\\
&=&\lambda_1\mathsf{P}(Q\ge c,\,Q_1\le c-1) \frac{1}{\lambda_2}=\frac{\lambda_1}{\lambda_2}\sum_{(i,j)\in \mathbb{S}^{*}}
\mathsf{P}(Q_1+Q_2=i+j,\,Q_2=j)\nonumber\\
&=&
\frac{\lambda_1}{\lambda_2}\sum_{(i,j)\in \mathbb{S}^{*}} \pi(i,j),\label{42}
\end{eqnarray}
where we also use property PASTA to apply the equality

$$
\pi(i,j)=
\lim_{n\to \infty} \frac{1}{n}\sum_{k=1}^n
\sum_{(i,j)\in \mathbb{S}^{*}}   
\1(Q(t_k^{(1)})=i+j,\,Q_2(t_k^{(1)})=j).
$$
The equality (\ref{42})  written as 

$$
\lambda_2\mathsf{E} N_T=\lambda_1\sum_{(i,j)\in \mathbb{S}^{*}} \pi(i,j)
$$
is intuitive and establishes a balance  between
the rate of the {\it interrupted} class-2 customers and the rate of {\it interrupting} class-1 customers. 
\begin{remark}
It is worth mentioning that relation (\ref{42}) holds also for general service time distribution of any class of customers, however in this case the stationary distribution $\{\pi(i,j)\}$ of (non-Markovian process $\{(Q_1(t),\,Q_2(t)),\,t\ge0\}$) is not analytically available.
\end{remark}

\section{Simulations and Numerical Insights}\label{sim_ana}

In this section, we present some numerical examples of the results obtained by the matrix analytic method presented in Section~\ref{Markovian_model}. In our experiment, for  fixed $\mu_1=4,\mu_2=20$ and values $c=2,5$,
we  show the changes in the values of some  performance measures against $\lambda_1$ and $\lambda_2$.   Under the same settings, we also carry out simulations and obtain the same results as those obtained by the matrix analytic method. Furthermore, to show the sensitivity of the service time distribution of primary users, we also compare the results by matrix analytic methods with those by simulations where service time of class-1 customers follows Erlang distributions with the shape parameter $r=5,10$.

The scale parameter is chosen such that the mean value remains the same as in the case of exponential distributions. The duration for all experimental simulations is set at $10^6$ time units, which is adequate for the simulation results to converge to their corresponding numerical results. The simulation results in all the figures are represented by the points marked with notation $\mathsf{sim}$, without which the results are understood to be obtained from numerical calculations. Simulation results with service time of class-1 customers following Erlang distributions are marked with the abbreviation $\mathsf{Erl.}$ in the legend.

\begin{figure}[H]
\centering
\includegraphics[width=0.8\linewidth]{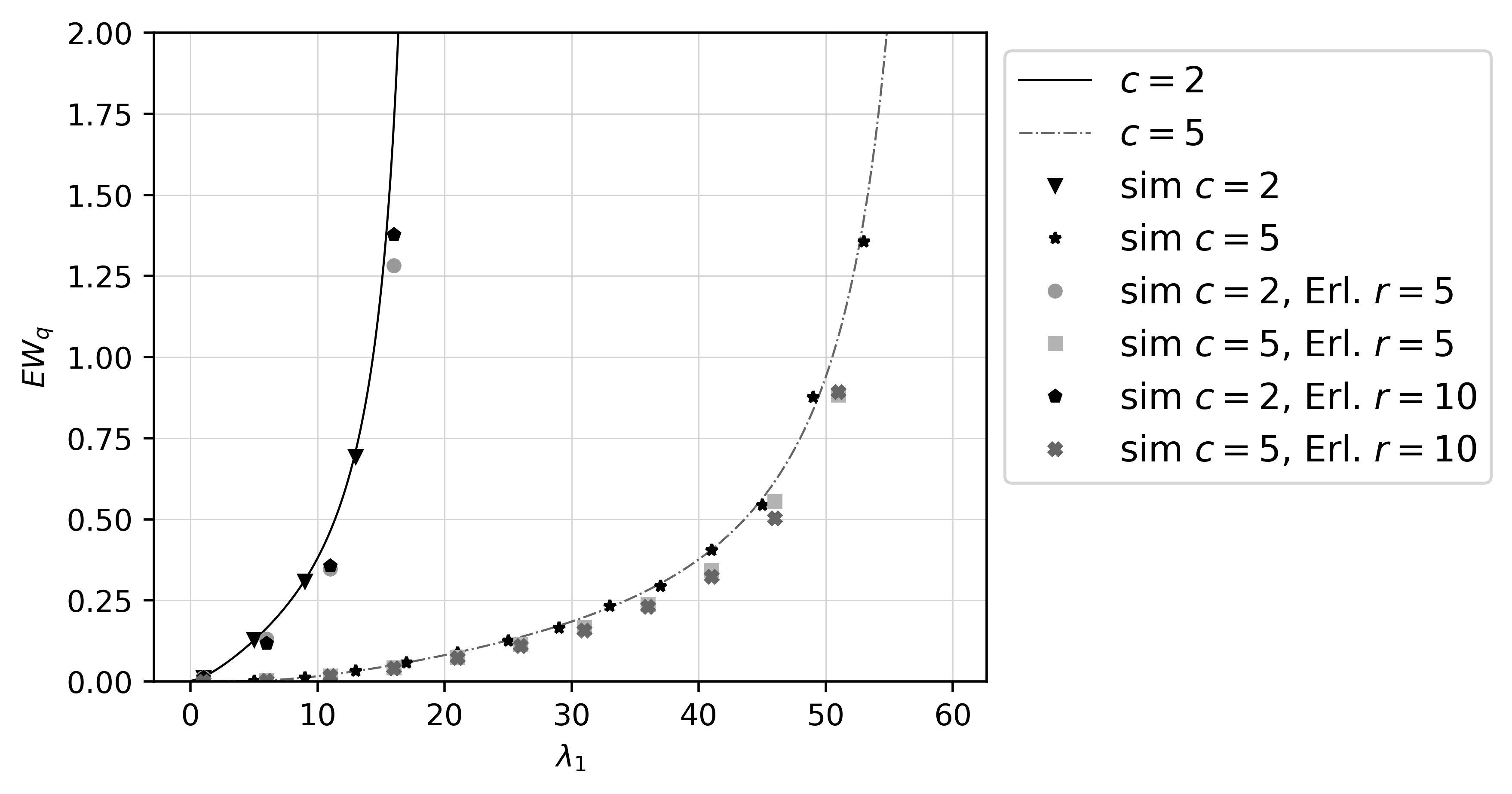} 
\caption{Average waiting time of class-2 customers against arrival rate of class-1 customers $(\lambda_{2} = 8)$.}
\label{fig:1}
\end{figure}
\begin{figure}[H]
\centering
\includegraphics[width=0.8\linewidth]{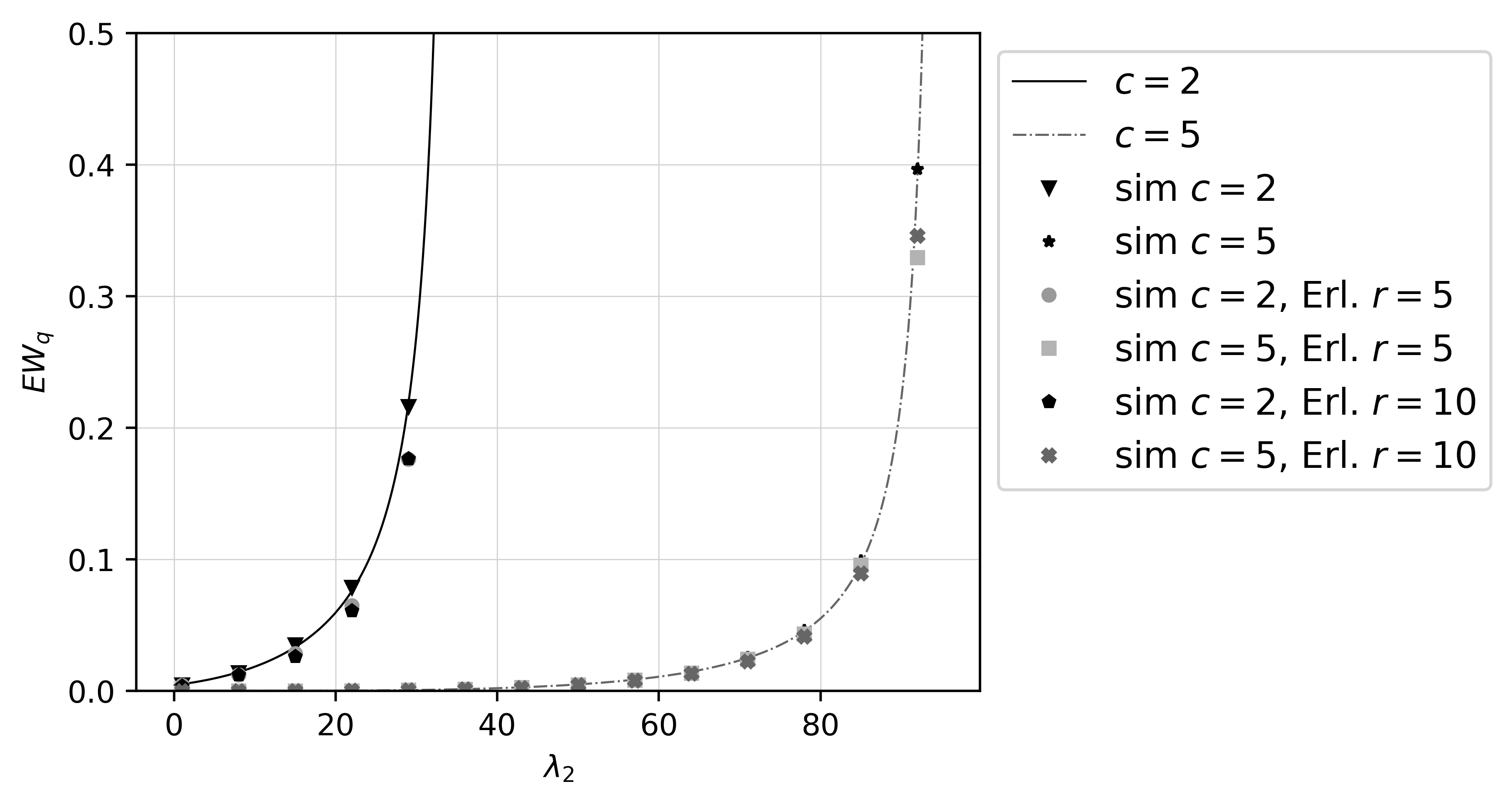} 
\caption{Average waiting time of class-2 customers against arrival rate of class-2 customers ($\lambda_{1} = 1$).}
\label{fig:2}
\end{figure}

Figure \ref{fig:1} and Figure \ref{fig:2} compare average waiting time of class-2 customers as $\lambda_1, \lambda_2$ and $c$ vary. It can be seen that the average amount of time that class-2 customers spend in the system ($\mathsf{E} W_q$) is larger when customers of either class arrive more frequently, or when there are fewer servers. We observe that $\mathsf{E} W_q$ with exponential service time distribution for class-1 is largest while $\mathsf{E} W_q$ with $r=5$ is larger than that with $r=10$. This indicates  that $\mathsf{E} W_q$ increases with the increase in the variance of service time of class-1 customers.

Figure \ref{fig:3} illustrates  how the average number of termination events per class-2 customer ($\mathsf{E}N_T$) changes according to the arrival rates of class-1 customers and the number of servers. Obviously, when class-1 customers arrive more frequently, more class-2 customers' sessions are terminated. Also, as the number of servers is larger, there are more spaces for all customers, and thus class-2 customers are less likely to be kicked out of the servers. The same patterns can be observed in Figure \ref{fig:4}. 
Furthermore, simulations results for Erlang service time distributions of class-1 customers are almost the same as the corresponding ones with exponential distributions. This shows that $\mathsf{E}N_T$ is (almost) insensitive to the service time distribution of class-1 customers. 

\begin{figure}[ht]
\centering
\includegraphics[width=0.8\linewidth]{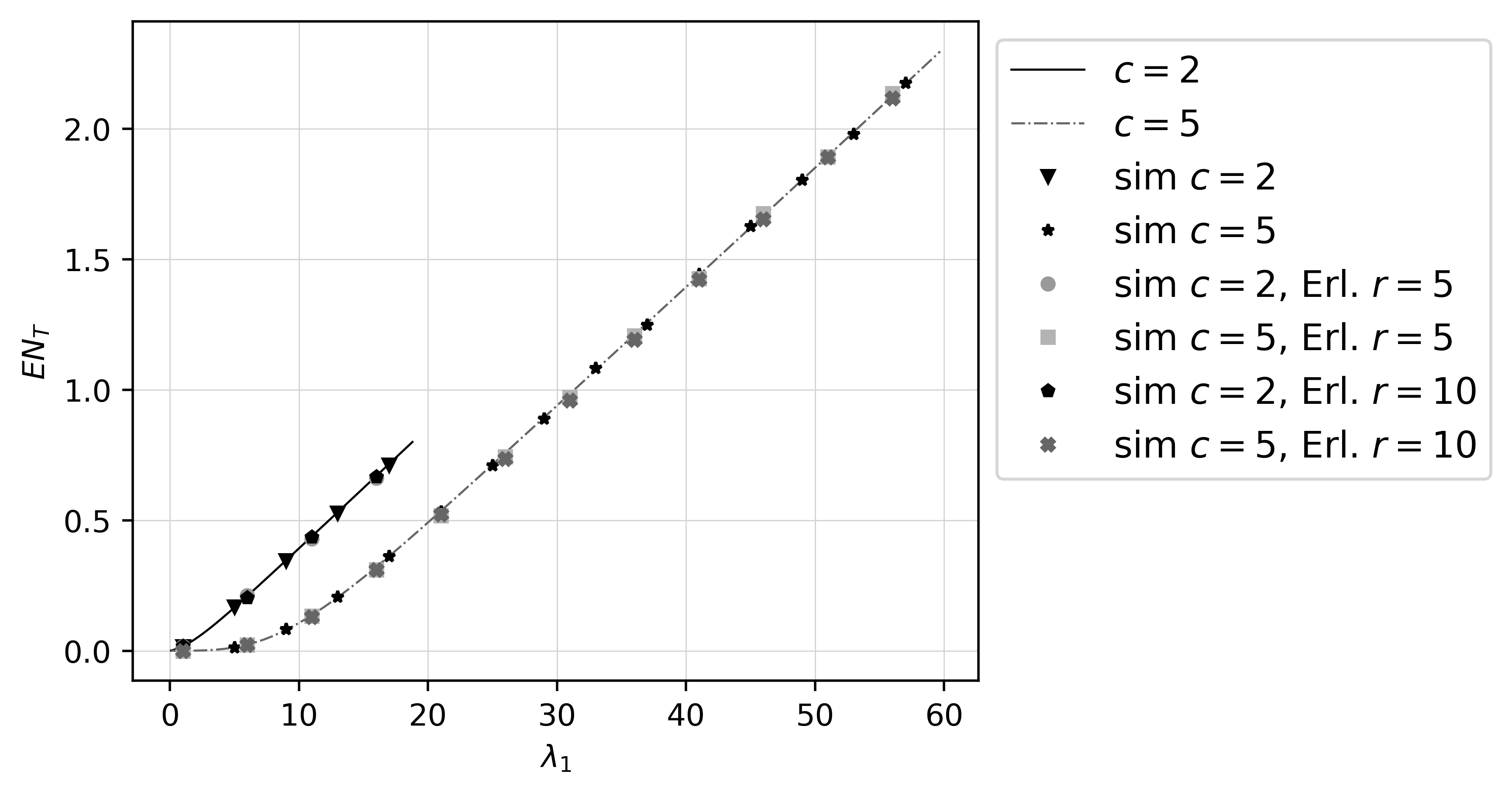} 
\caption{Average number of termination events per class-2 customers against arrival rate of class-1 customers $(\lambda_{2} = 8)$.}
\label{fig:3}
\end{figure}

\begin{figure}[H]
\centering
\includegraphics[width=0.8\linewidth]{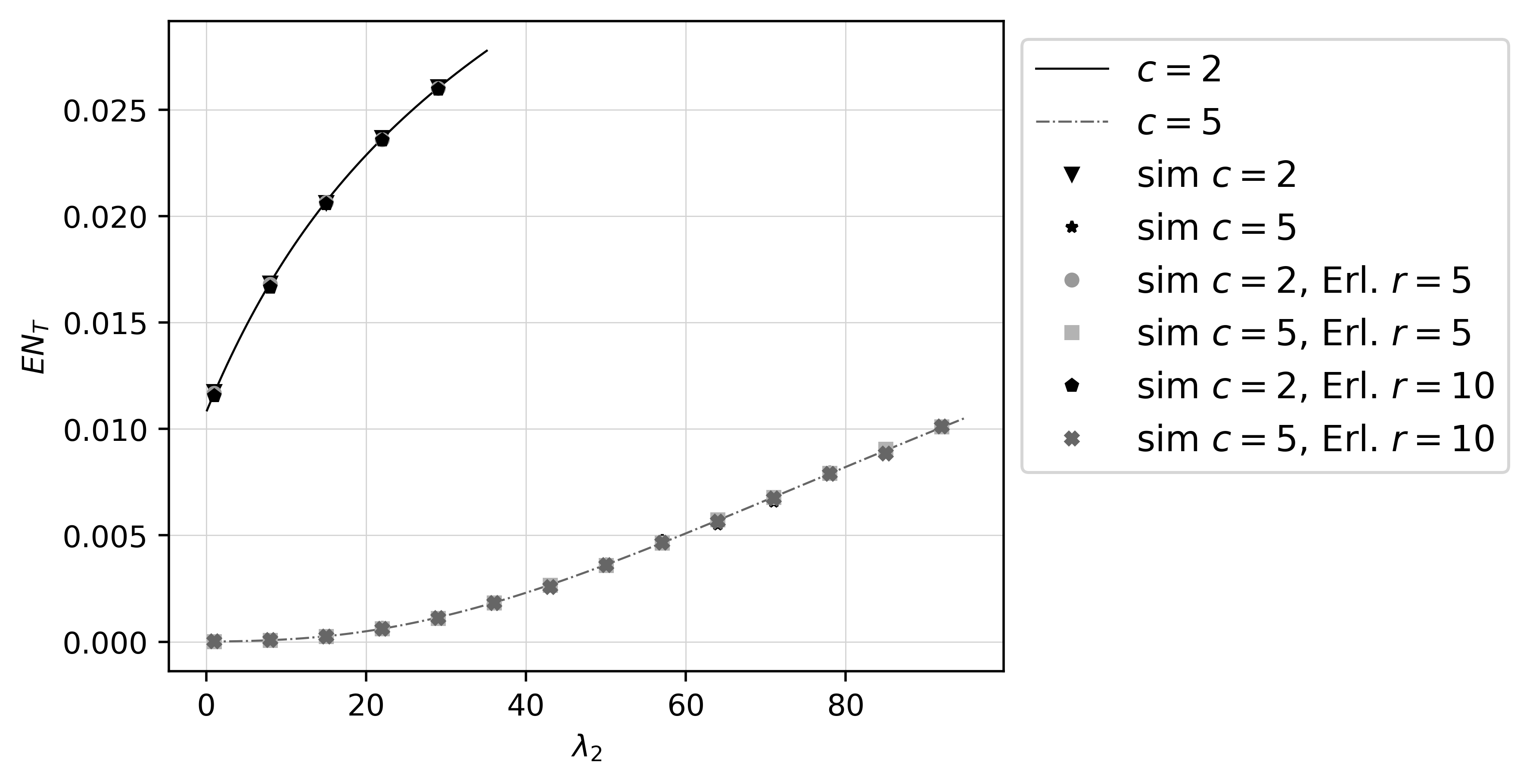} 
\caption{Average number of termination events per class-2 customers against arrival rate of class-2 customers ($\lambda_{1} = 1$).}
\label{fig:4}
\end{figure}

Figure \ref{fig:5} shows the distributions of the number of times that each class-2 customer is terminated. In this experiment, we set $\lambda_1 = 20,\lambda_2 = 30, c = 10$, and let the service time of class-1 customers follow an exponential distribution (denoted by $\mathsf{Exp.}$ in the figure), and Erlang distributions. It can be seen that higher numbers of termination times occur with smaller probabilities. Also, there is no significant difference in the results when we modify the distribution of service time for class-1 customers under the three settings.

\begin{figure}[ht]
\centering
\includegraphics[width=0.6\linewidth]{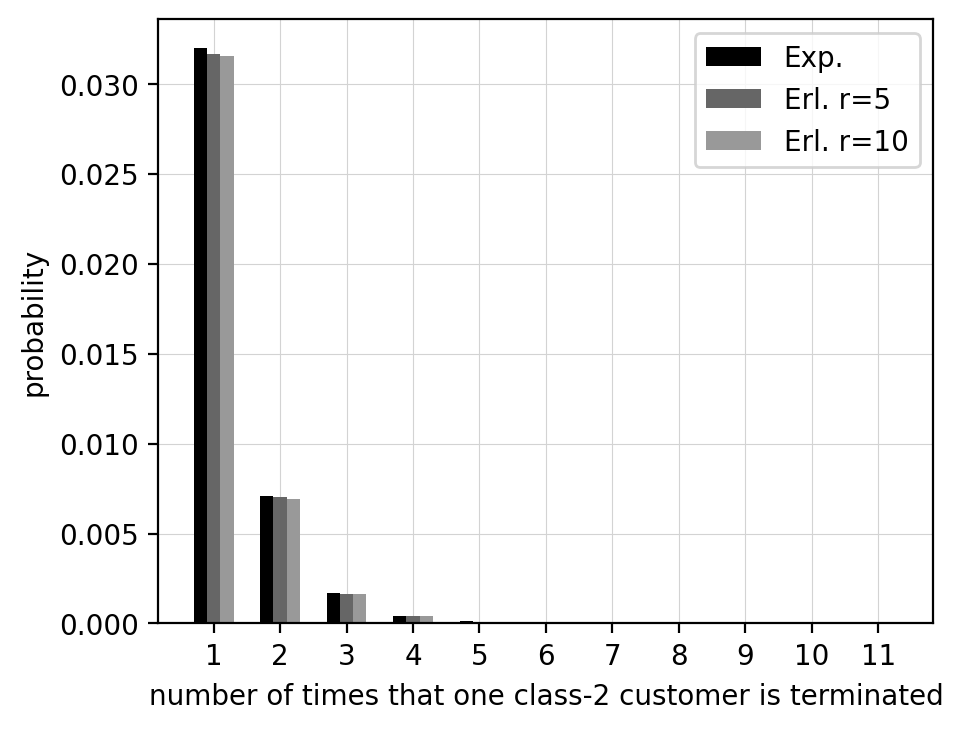} 
\caption{Distribution of number of termination events ($\lambda_1 = 20,\lambda_2 = 30, c = 10$).}
\label{fig:5}
\end{figure}

\begin{figure}[H]
\centering
\includegraphics[width=0.8\linewidth]{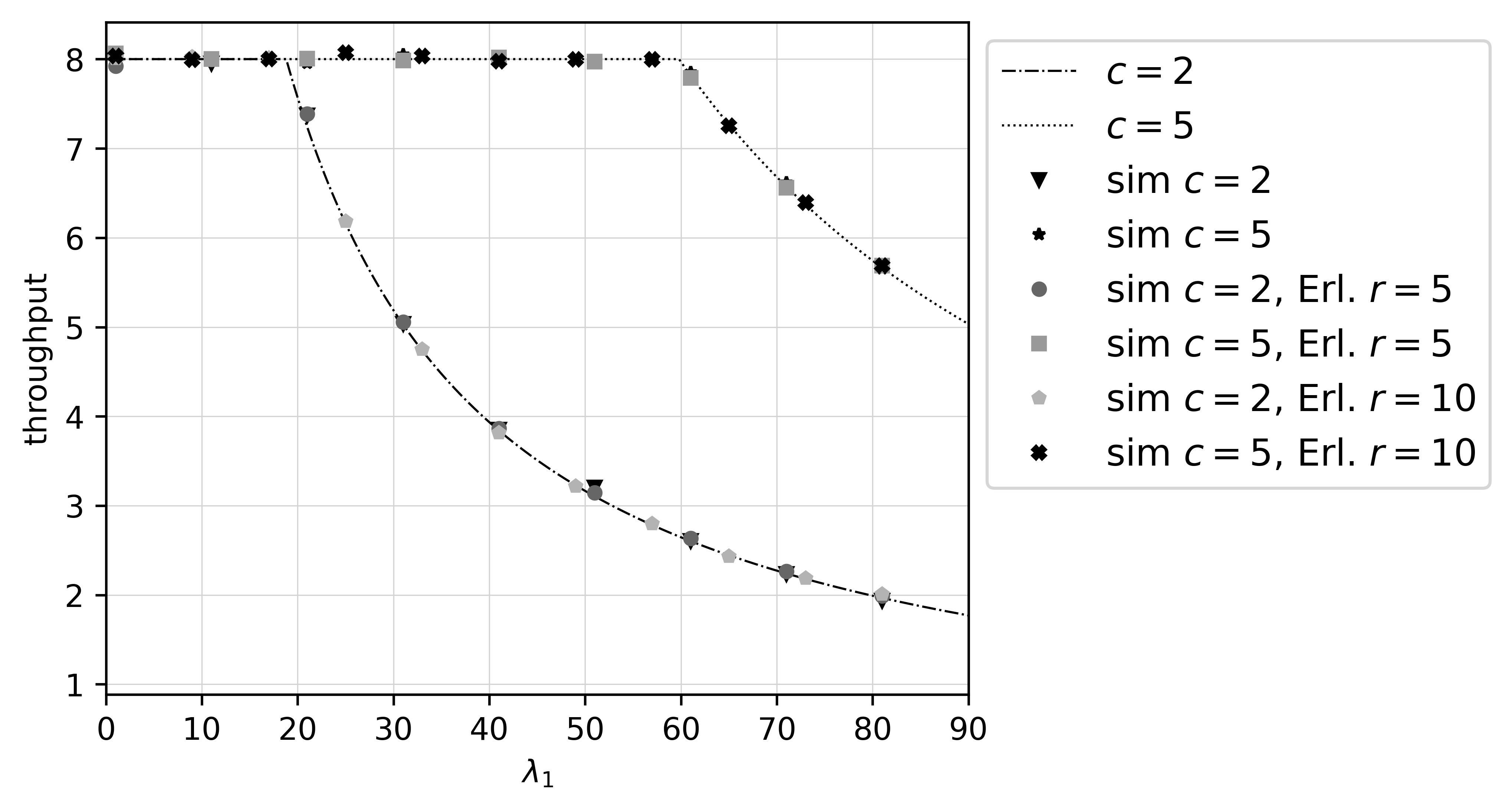} 
\caption{Throughput against arrival rate of class-1 customers ($\lambda_{2} = 8$).}
\label{fig:6}
\end{figure}

\begin{figure}[ht]
\centering
\includegraphics[width=0.8\linewidth]{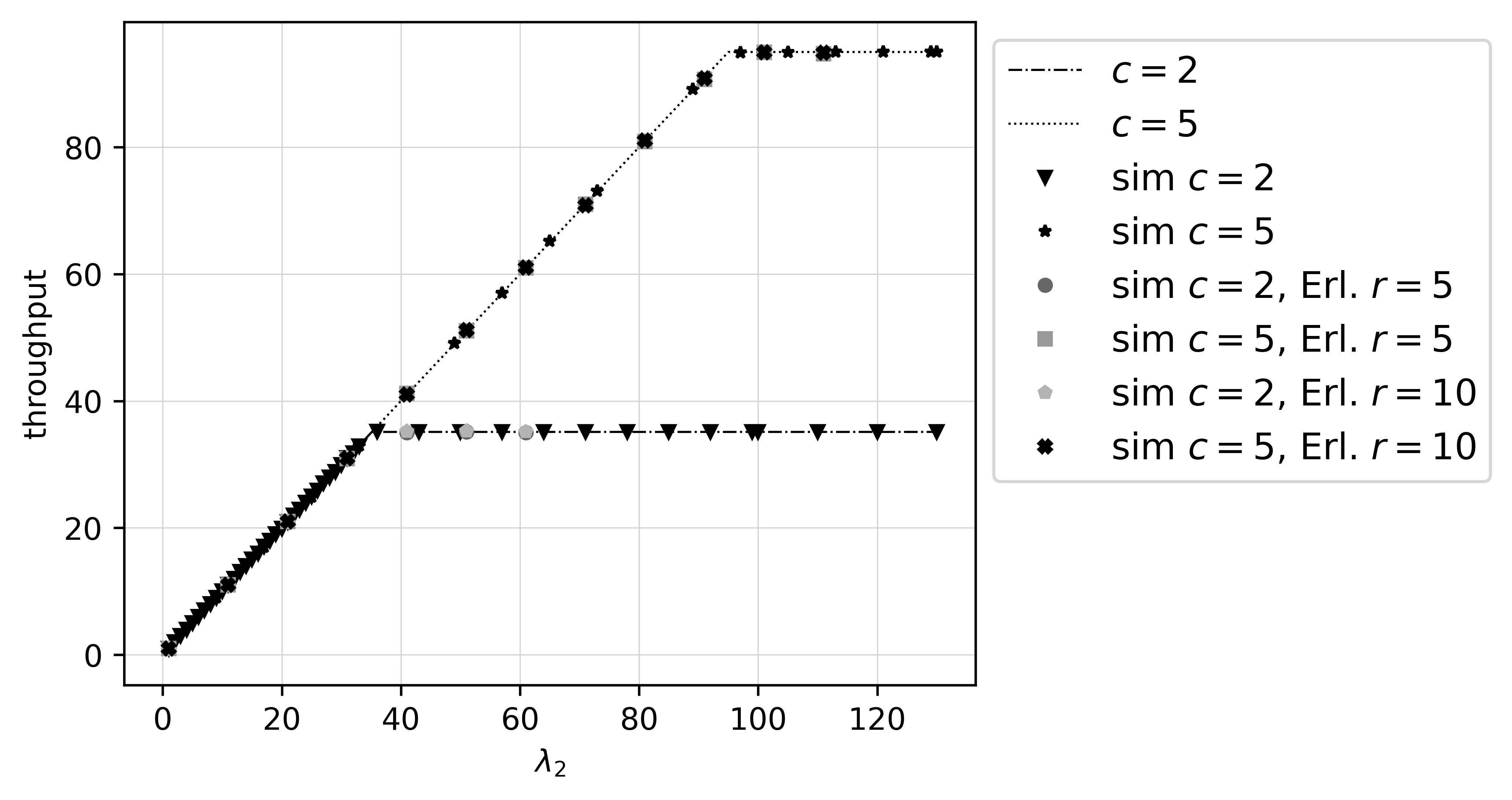} 
\caption{Throughput against arrival rate of class-2 customers ($\lambda_{1} = 1$).}
\label{fig:7}
\end{figure}

Figure \ref{fig:6} reflects changes in the throughput of class-2 customers against $\lambda_1$ when $\lambda_2$ is fixed. It can be seen that the throughput values remain unchanged at $\lambda_2$ when $\lambda_1$ goes up to certain thresholds, then drop as $\lambda_1$ continues to increase. Meanwhile, Figure \ref{fig:7} indicates that the throughput is equal to $\lambda_2$ up to a certain threshold of $\lambda_2$, then remains unchanged at a value $\lambda_{max}$ when class-2 customers arrive at very high rates. At that, $\lambda_{max}$ is defined as 

$$\lambda_{max}=\frac{\sum_{i=0}^{c-1}(c-i)\frac{\rho_1^i}{i!}}{\sum_{i=0}^{c}\frac{\rho_1^i}{i!}}\mu_2,$$
which is the right hand side of the stability condition (\ref{eq:39}).

Finally, it is noticeable that when we let service times of class-1 customers follow Erlang distributions, the results do not change much as compared to the exponential case. This agrees with our stability condition which depends on only the mean service time of class-1 customers.

\section{Conclusion}\label{conclusion:sec}
In this paper, we have considered a modified Erlang system for cognitive radio networks and related applications. We have established a stability condition which is insensitive to the service time distributions of primary and secondary users. This result has implied that the throughput of secondary users is insensitive to the service time distributions of primary and secondary users, provided that the means are fixed. For the case of exponential service time distributions of primary and secondary customers, we have derived some stationary performance measures. Our extensive simulations have shown that the mean waiting time of secondary users increases with the increase in the variance of service time of primary users while the mean number of terminations of secondary users is almost insensitive to the service time distributions of primary users, provided that the means are fixed. Our findings could be used in resource allocation in cognitive radio networks and related applications.

\AcknowledgementSection
The research of EM and SR is  supported by the Russian Foundation for Basic Research, projects  18-07-00147, 18-07-00156 and 19-07-00303. The research of TP and HQN was supported in part by JSPS KAKENHI Grant Number 18K18006.





\end{document}